\newcommand{\stkout}[1]{\ifmmode\text{\sout{\ensuremath{#1}}}\else\sout{#1}\fi}
\colorlet{Changes@Color}{red}
\newtheorem{Theorem}{Theorem}[section]
\newtheorem{Proposition}[Theorem]{Proposition}
\newtheorem{Lemma}[Theorem]{Lemma}
\newtheorem{Corollary}[Theorem]{Corollary}
\newtheorem{Conjecture}[Theorem]{Conjecture}
\theoremstyle{definition}
\newtheorem{Definition}[Theorem]{Definition}
\newtheorem{Remark}{Remark}[section]
\newtheorem{Problem}{Problem}[section]
\newtheorem{Example}{Example}[section]
\newcommand{\FF}{\mathbb{F}}
\newcommand{\ZZ}{\mathbb{Z}}
\newcommand{\CC}{\mathbb{C}}
\newenvironment{breakablealgorithm}
  {% \begin{breakablealgorithm}
   \begin{center}
     \refstepcounter{algorithm}% New algorithm
     \hrule height.8pt depth0pt \kern2pt% \@fs@pre for \@fs@ruled
     \renewcommand{\caption}[2][\relax]{% Make a new \caption
       {\raggedright\textbf{\ALG@name~\thealgorithm} ##2\par}%
       \ifx\relax##1\relax % #1 is \relax
         \addcontentsline{loa}{algorithm}{\protect\numberline{\thealgorithm}##2}%
       \else % #1 is not \relax
         \addcontentsline{loa}{algorithm}{\protect\numberline{\thealgorithm}##1}%
       \fi
       \kern2pt\hrule\kern2pt
     }
  }{% \end{breakablealgorithm}
     \kern2pt\hrule\relax% \@fs@post for \@fs@ruled
   \end{center}
  }
\def\@eqnnum{{\normalfont \color{blue} (\theequation)}} 
\begin{document}

\title{Improving the Gilbert-Varshamov Bound by Graph Spectral Method}
\author{Zicheng Ye$^{1,2}$, Huazi Zhang${^3}$, Rong Li${^3}$, Jun Wang${^3}$, Guiying Yan$^{1,2*}$ and Zhiming Ma$^{1,2*}$} 
\address{\noindent $^1$Academy of Mathematics and Systems Science, CAS\newline
$^2$University of Chinese Academy of Sciences\newline
$^3$Huawei Technologies Co. Ltd}
\email{yezicheng@amss.ac.cn, zhanghuazi@huawei.com, lirongone.li@huawei.com, wangjun@huawei.com, yangy@amss.ac.cn, mazm@amt.ac.cn}

\keywords{Gilbert–Varshamov bound, independence number,  graph spectral method, Cayley graph, linear codes} 

\begin{abstract}
We improve Gilbert-Varshamov bound by graph spectral  method. Gilbert graph $G_{q,n,d}$ is a graph with all vectors in $\FF_q^n$ as vertices where two vertices are adjacent if their Hamming distance is less than $d$. In this paper, we calculate the eigenvalues and eigenvectors of $G_{q,n,d}$ using the properties of Cayley graph. The improved bound is associated with the minimum eigenvalue of the graph. Finally we give an algorithm to calculate the bound and linear codes which satisfy the bound. 
\end{abstract}

\maketitle

\section{Introduction}

Let $q$ be a prime number and $\FF_q$ be the finite field given by the integers mod $q$.  $\FF_q^n$ is the $n$-dimension vector space over $\FF_q$. A subset $C$ of $\FF_q^n$ is called a $q$-ary code with length $n$. $C$ is said to be linear if $C$ is a subspace. The vectors in $C$ are called codewords. The dimension of $C$ is given by $k=\log_q|C|$, and the rate is given by $k/n$. 

Let $c=(c_1,...,c_n)$ be a vector in $\FF_q^n$. The Hamming weight of $c$ is $w(c) = |\{i|c_i \neq 0\}|$. The Hamming distance between two vectors $c, c'\in \FF_q^n$ is $d(c,c') = |\{i|c_i \neq c_i'\}|$. $C$ is called a code with minimum distance $d$ if the distance of any two distinct codewords in $C$ are greater or equal to $d$. The relative distance of $C$ is then given by $d/n$. A code in $\FF_q^n$ with dimension $k$ and minimum distance $d$ is called an $[n,k,d]_q$ code.

Let $A_q(n, d)$ be the maximum number of codewords in a $q$-ary code with length $n$ and minimum Hamming distance $d$. Finding the value of $A_q(n,d)$ is a very fundamental and difficult problem in coding theory\cite{sloane1989unsolved}. The first and most important lower bound of $A_q(n,d)$ is Gilbert-Varshamov bound:

\begin{Proposition}[Gilbert-Varshamov Bound\cite{gilbert1952comparison}]\label{pro:1.1}
Let
\[V_q(n,d)=\sum_{i=0}^d \binom{n}{i}(q-1)^i,\]
be the number of vectors with Hamming weight less than $d$, then
\begin{equation}
A_q(n, d)\geq \frac{q^n}{V_q(n,d-1)}.
\end{equation}
\end{Proposition}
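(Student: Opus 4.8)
The plan is to prove the bound by the classical greedy (maximal‑code) covering argument. Since $\FF_q^n$ is finite, among all $q$-ary codes of length $n$ with minimum distance at least $d$ there is one, call it $C$, of maximum cardinality, and it suffices to show $|C| \ge q^n/V_q(n,d-1)$. The crucial feature of $C$ is that it is \emph{maximal}: no vector of $\FF_q^n$ can be adjoined to $C$ without destroying the minimum-distance property.

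First I would record that for every $c \in \FF_q^n$ the Hamming ball $B(c,d-1)=\{x\in\FF_q^n : d(x,c)\le d-1\}$ has cardinality exactly $V_q(n,d-1)=\sum_{i=0}^{d-1}\binom{n}{i}(q-1)^i$, independently of the center: there are $\binom{n}{i}$ ways to choose the $i$ coordinates in which $x$ differs from $c$ and $q-1$ admissible values in each such coordinate, giving $\binom{n}{i}(q-1)^i$ vectors at distance exactly $i$, and summing over $0\le i\le d-1$ gives the count. Next I would show $\bigcup_{c\in C}B(c,d-1)=\FF_q^n$: if some $x$ lay outside every such ball, then $d(x,c)\ge d$ for all $c\in C$, so $C\cup\{x\}$ would still have minimum distance $\ge d$ and be strictly larger than $C$, contradicting maximality. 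Finally, bounding the size of a union by the sum of sizes gives $q^n = \bigl|\bigcup_{c\in C}B(c,d-1)\bigr| \le \sum_{c\in C}|B(c,d-1)| = |C|\,V_q(n,d-1)$, and dividing through by $V_q(n,d-1)$ yields the claim.

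The main obstacle, to the extent there is one in so classical a statement, is conceptual rather than technical: one must argue with a \emph{maximal} code (one that cannot be enlarged) rather than merely a ``large'' code, since it is maximality that forces the covering property, and one should note that the ball size is genuinely center-independent — immediate from translation invariance of the Hamming metric, $d(x,c)=d(x-c,0)$, so the alphabet need not even carry a field structure for this step. I would also remark that the same argument can be phrased constructively as a greedy procedure — initialize $C=\emptyset$ and repeatedly adjoin any vector at Hamming distance $\ge d$ from all current codewords until none remains — whose termination and output bound are justified exactly as above; this formulation anticipates the algorithmic perspective developed later in the paper.
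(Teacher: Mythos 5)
Your proof is correct, and it is essentially the same argument the paper uses: the paper derives the bound by combining Proposition \ref{prop:2.1} ($A_q(n,d)=\alpha(G_{q,n,d})$) with the greedy bound $\alpha(G)\geq |V(G)|/(\Delta(G)+1)$ of Lemma \ref{lemma:1.5}, and since $\Delta(G_{q,n,d})+1=V_q(n,d-1)$ is exactly your ball size $|B(c,d-1)|$, the maximality-forces-covering counting you carry out with Hamming balls is the same computation phrased in coding-theoretic rather than graph-theoretic language. The only difference is packaging: the paper's route through the Gilbert graph is chosen deliberately because the independence-number formulation is the launching point for the spectral improvements in Sections \ref{sec:2} and \ref{sec:3}.
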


Proposition \ref{pro:1.1} has been improved variously in \cite{varshamov1957estimate}, \cite{elia1983some},\cite{barg2000strengthening}, \cite{trachtenberg2002designing},\cite{jiang2004asymptotic}, \cite{o2006bounds}, \cite{spasov2009some} and\cite{hao2020distribution}. Among them, the best improvement on the order of magnitude is from Jiang and Vardy \cite{jiang2004asymptotic} by studying the independence number of the graph $G_{q,n,d}$ defined as follows:

\begin{Definition}\cite{jiang2004asymptotic}\label{def:2.1}
Gilbert graph $G_{q,n,d}$ is a graph whose $V(G_{q,n,d}) = \FF_q^n$ and $\forall u,v \in V(G_{q,n,d})$, $(u,v)\in E(G_{q,n,d})$ if and only if $1 \leq d(u, v)\leq d- 1$.
\end{Definition}

People are also interested to the asymptotic form of Gilbert-Varshamov bound as $n$ goes to infinity. The maximum rate of code families with  relative distance $\delta$ is defined as
\[\beta_q(\delta)=\lim_{n\to\infty} \frac{1}{n} A_q(n,n\delta).\]
Notice that
\[\frac{1}{n}\log_q V_q(n, d) = h_q(\frac{d}{n})+o(1)\]
as $n\to\infty$ where
\[h_q(x)=x\log_q(q-1)-x\log_qx-(1-x)\log_q(1-x).\]

This implies the asymptotic form of Proposition \ref{pro:1.1}:

\begin{Proposition}[Asymptotic Gilbert-Varshamov Bound]\label{pro:1.2}
For every $0\leq \delta<1-1/q$, 
\begin{equation}
\beta_q(\delta)\geq 1-h_q(\delta).
\end{equation}
\end{Proposition}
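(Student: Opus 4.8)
The plan is to derive this asymptotic statement directly from the finite-length bound of Proposition \ref{pro:1.1} by taking logarithms, normalizing by $n$, and letting $n\to\infty$. Fix $\delta$ with $0\le\delta<1-1/q$ and choose an integer sequence $d_n$ with $d_n/n\to\delta$ (for instance $d_n=\lceil n\delta\rceil$, to deal with the fact that $n\delta$ need not be an integer). Applying Proposition \ref{pro:1.1} with block length $n$ and distance $d_n$ gives
\[
A_q(n,d_n)\ \ge\ \frac{q^n}{V_q(n,d_n-1)},
\]
and hence, taking $\log_q$ and dividing by $n$,
\[
\frac1n\log_q A_q(n,d_n)\ \ge\ 1-\frac1n\log_q V_q(n,d_n-1).
\]

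The second step is to compute the limit of the right-hand side. Since $(d_n-1)/n\to\delta$ and $\delta<1-1/q$, the volume estimate recalled just before the statement, namely $\frac1n\log_q V_q(n,m)=h_q(m/n)+o(1)$ as $n\to\infty$, combined with the continuity of $h_q$ on $[0,1-1/q)$, yields $\frac1n\log_q V_q(n,d_n-1)\to h_q(\delta)$. Feeding this into the displayed inequality gives
\[
\liminf_{n\to\infty}\frac1n\log_q A_q(n,d_n)\ \ge\ 1-h_q(\delta),
\]
which is exactly the claim $\beta_q(\delta)\ge 1-h_q(\delta)$ (reading the limit in the definition of $\beta_q$ as a $\liminf$, which is all that is needed for a lower bound on the rate).

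I do not expect a substantial obstacle here: all the real content sits in the binomial-sum estimate $\frac1n\log_q V_q(n,d)=h_q(d/n)+o(1)$, which is assumed. The remaining points are pure bookkeeping — rounding $n\delta$ to an integer, replacing the distance parameter $d_n-1$ inside $V_q$ by $d_n$ (whose effect vanishes by continuity of $h_q$), and checking that the $o(1)$ term is uniform enough along the chosen sequence. The hypothesis $\delta<1-1/q$ is precisely the regime in which $V_q(n,d)$ is dominated by its last term $\binom{n}{d}(q-1)^d$, so the estimate for $\frac1n\log_q V_q$ applies; outside this range $1-h_q(\delta)$ is non-positive and the statement is vacuous.
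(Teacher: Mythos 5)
Your proposal is correct and follows exactly the route the paper itself takes: it derives Proposition \ref{pro:1.2} by taking $\log_q$ of the finite-length bound in Proposition \ref{pro:1.1} and invoking the stated volume estimate $\frac{1}{n}\log_q V_q(n,d)=h_q(d/n)+o(1)$. The extra bookkeeping you supply (rounding $n\delta$, the shift from $d_n$ to $d_n-1$, continuity of $h_q$) is exactly the detail the paper leaves implicit, so there is nothing to add.
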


Tsfasman, Vladuts and Zink \cite{tsfasman1982modular} have proved that $\beta_q(\delta)> 1-h_q(\delta)$ for some $q \geq 49$. However, when $q = 2$, some people conjecture that there does not exist any binary code with relative distance $\delta$ and rate $R>1-h_2(\delta)$ as $n\to\infty$. \cite{jiang2004asymptotic}

In this paper, we improve Gilbert-Varshamov bound by analysing the spectrum of $G_{q,n,d}$ and bounding its independence number. In section \ref{sec:2}, we use properties of Cayley graph to calculate the closed form of the eigenvalues and eigenvectors of $G_{q,n,d}$, and get an upper bound and a lower bound using the minimum eigenvalue. The improvement of Gilbert-Varshamov bound is given in section \ref{sec:3}. Section \ref{sec:4} concludes the results and gives some open problems.

\section{the Spectrum of the Gilbert Graph}\label{sec:2}

A graph $G$ is a set of $N$ vertices $V(G)=\{v_1,...,v_N\}$ and a set of edges $E(G)\subseteq V(G)\times V(G)$. For $v_i,v_j\in V(G)$, we say $v_i$ and $v_j$ are adjacent if $(v_i,v_j)\in E(G)$. The neighbourhood of $v\in V(G)$ is the set of all vertices adjacent to $v$ and denoted by $N(v)$. The number of edges that are incident to $v$ is the degree of $v$. A graph is called a $D$-regular graph if each vertex has degree $D$. The adjacent matrix for $G$ is a $N\times N$ matrix $A$ where $A_{ij}$ is 1 if $v_i$ and $v_j$ are adjacent, and 0 otherwise. 

A subset of $V$ is called an independent set if none of vertices in the set are adjacent. The independence number of $G$ is the size of the largest independent set in $G$ and denoted by $\alpha(G)$. 

The spectrum of a graph is the eigenvalues of its adjacent matrix, which has a strong relationship with the structure of the graph \cite{brouwer2011spectra}, including independence number we study in this paper.

It is clear that $G_{q,n,d}$ is a $(V_q(n,d - 1) - 1)$-regular graph. More exactly, $G_{q,n,d}$ is vertex-transitive\cite{el2012graph}, \cite{godsil2013algebraic}. Then Gilbert-Varshamov bound is the direct consequence of following facts:

\begin{Proposition}\label{prop:2.1}
\[\alpha(G_{q,n,d}) = A_q(n,d).\]
\end{Proposition}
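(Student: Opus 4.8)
The plan is to show that independent sets in $G_{q,n,d}$ are precisely the $q$-ary codes of length $n$ with minimum distance at least $d$, and then read off the claim by taking maxima on both sides. First I would fix an independent set $S \subseteq V(G_{q,n,d}) = \FF_q^n$ and take two distinct vectors $u, v \in S$. Since $u \neq v$ we have $d(u,v) \geq 1$, and since $S$ is independent, $(u,v) \notin E(G_{q,n,d})$, so by Definition \ref{def:2.1} the distance $d(u,v)$ cannot lie in $\{1,\dots,d-1\}$; hence $d(u,v) \geq d$. As $u,v$ were arbitrary distinct elements of $S$, the set $S$ is a code with minimum distance at least $d$, so $|S| \leq A_q(n,d)$. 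Taking the supremum over all independent sets $S$ gives $\alpha(G_{q,n,d}) \leq A_q(n,d)$.

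For the reverse inequality I would run the same unpacking backwards. Let $C \subseteq \FF_q^n$ be a code of length $n$ with minimum distance at least $d$ and with $|C| = A_q(n,d)$. For any two distinct $u,v \in C$ we have $d(u,v) \geq d$, so $1 \leq d(u,v) \leq d-1$ fails, and therefore $(u,v) \notin E(G_{q,n,d})$. Thus no two vertices of $C$ are adjacent in $G_{q,n,d}$, i.e.\ $C$ is an independent set, whence $\alpha(G_{q,n,d}) \geq |C| = A_q(n,d)$. Combining the two inequalities yields $\alpha(G_{q,n,d}) = A_q(n,d)$.

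Since the argument is essentially an unfolding of the two definitions, there is no serious analytic obstacle; the only point that needs care is the handling of the two boundary regimes of the Hamming distance that the edge relation excludes, namely $d(u,v) = 0$ (which only occurs for $u = v$ and is irrelevant to both the independence condition and the minimum-distance condition) and $d(u,v) \geq d$ (the allowed regime). Being explicit that ``$1 \le d(u,v) \le d-1$ fails'' is equivalent to ``$d(u,v) = 0$ or $d(u,v) \geq d$'', and that the former is impossible for distinct vectors, is all that is required to make the correspondence between independent sets and minimum-distance-$d$ codes a genuine bijection-preserving-cardinality rather than just an inequality in one direction.
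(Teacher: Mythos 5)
Your proposal is correct and is essentially the same argument the paper gives: the paper justifies Proposition \ref{prop:2.1} in one line by noting that a $q$-ary code of length $n$ has minimum distance $d$ if and only if it is an independent set in $G_{q,n,d}$, and your two inequalities are just a careful unpacking of that equivalence. No issues.
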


\begin{Lemma}\label{lemma:1.5}
\[\alpha(G)\geq\frac{|V(G)|}{\Delta(G)+1}\]
where $\Delta(G)$ is the maximal degree of $G$.
\end{Lemma}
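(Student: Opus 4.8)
The plan is to prove this by the classical greedy deletion argument, which is elementary and self-contained. Write $\Delta = \Delta(G)$ and $N = |V(G)|$. I will construct an independent set $I$ iteratively: start with $I = \emptyset$ and $H = G$; as long as $H$ is nonempty, pick any vertex $v$ of $H$, add $v$ to $I$, and delete from $H$ the vertex $v$ together with all of its neighbours in $H$ (that is, set $H \leftarrow H \setminus (\{v\} \cup N_H(v))$). The process terminates because $H$ loses at least one vertex each round, and by construction no two chosen vertices are ever adjacent, so the final set $I$ is independent.

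The key step is the counting bound on how many vertices are removed per round. Since $\deg_H(v) \le \deg_G(v) \le \Delta$ for every vertex, each round deletes at most $1 + \Delta$ vertices from $H$ while contributing exactly one vertex to $I$. Hence if the process runs for $t$ rounds then $N \le t(\Delta+1)$ and $|I| = t$, giving $\alpha(G) \ge |I| = t \ge N/(\Delta+1)$, which is the claimed inequality.

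There is essentially no obstacle here; the only point requiring a line of care is the monotonicity $\deg_H(v) \le \deg_G(v)$ under vertex deletion, which is immediate since deleting vertices only removes edges. I would also remark that a stronger statement, the Caro--Wei bound $\alpha(G) \ge \sum_{v \in V(G)} \frac{1}{\deg_G(v)+1}$, can be obtained by the probabilistic argument in which one orders the vertices by a uniformly random permutation and keeps each vertex that precedes all of its neighbours; since $\deg_G(v) \le \Delta$ for all $v$, that sum is at least $N/(\Delta+1)$, recovering Lemma~\ref{lemma:1.5}. For the purposes of deriving the Gilbert--Varshamov bound from Proposition~\ref{prop:2.1}, however, the greedy argument above is all that is needed, applied to $G = G_{q,n,d}$ with $\Delta(G_{q,n,d}) = V_q(n,d-1)-1$.
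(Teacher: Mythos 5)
Your proof is correct and follows essentially the same approach as the paper, which justifies the lemma in one sentence by the same greedy observation that each vertex placed into an independent set forbids at most $\Delta(G)+1$ vertices; your write-up simply makes the deletion iteration and the counting explicit. The Caro--Wei remark is a nice aside but not needed here.
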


Here, Proposition \ref{prop:2.1} is from that a $q$-ary code of length $n$ has minimum distance $d$ if and only if it is an independent set in $G_{q,n,d}$. Lemma \ref{lemma:1.5} holds since if some vertex $v$ is in an independent set $I$ of graph $G$, it forbids at most $\Delta(G)+1$ vertices (including $v$ itself) to be added into $I$.

\begin{Definition}[\cite{godsil2013algebraic}]
A group $(H,\cdot)$ is a set $H$ together with a binary operation $H\times H\to H$ denoted as $\cdot$ such that the following properties are satisfied:

1. For all $a, b, c \in G$, $(a\cdot b)\cdot c = a\cdot(b\cdot c)$. 

2. There exists an element $e\in H$ called identity element such that for every $a\in H$, $a\cdot e = e\cdot a=a$. 

3. For each $a \in H$, there exists a unique element $a^{-1} \in H$ called inverse element of $a$ such that $a\cdot a^{-1} = a^{-1}\cdot a=e$. 

Moreover, if a group $(H,\cdot)$ also satisfies $a\cdot b = b\cdot a$ for all $a, b \in H$, it is called an Abelian group.
\end{Definition}

\begin{Definition}[\cite{godsil2013algebraic}, \cite{brouwer2011spectra}, \cite{apostol2013introduction}]
Denote that content $(H,\cdot)$ is a finite group and $S \subseteq H$ satisfies $\{s^{-1}|s\in S\}=S$ and $1\notin S$. The (finite and undirected) Cayley graph on $H$ with difference set $S$ is denoted as $\Gamma$ with vertex set $H$ and edge set $E = \{(x, y) | yx^{-1} \in S\}$. 
\end{Definition}

In this paper, we always assume $H$ is an Abelian group.

\begin{Example}
$(\FF_q^n,+)$ is an Abelian group with $(u_1,u_2,...,u_n)+(v_1,v_2,...,v_n)=(u_1+v_1,u_2+v_2,...,u_n+v_n)$, here '$+$' is the addition operation of integers mod $q$. The identity element is $(0,0,...,0)$ and the inverse element of $u=(u_1,u_2,...,u_n)$ is $-u=(-u_1,-u_2...,-u_n)$.

Since $w(u)$=$w(-u)$, $G_{q,n,d}$ can be regarded as a Cayley graph on $H= \mathbb{F}_q^n$ with $S=\{u\in \mathbb{F}_q^n|1\leq w(u)\leq d-1\}$.
\end{Example}

To compute the spectrum of a Cayley graph $\Gamma$, let us define a character of $H$ to be a map $\chi : H \to \CC^*$ satisfying $\chi(xy) = \chi(x)\chi(y)$, here $\CC$ is the set of complex numbers and $\CC^*=\CC-\{0\}$. Then
\[\sum_{y\in N(x)}\chi(y) = (\sum_{s\in S}\chi(s))\chi(x),\]
so the vector $(\chi(x))_{x\in H}$ is an eigenvector of the adjacency matrix of $\Gamma$ with eigenvalue $\sum_{s\in S}\chi(s)$.

To calculate the closed form of the spectrum for $G_{q,n,d}$, we need to use Krawtchouk polynomials. 

\begin{Definition}[\cite{levenshtein1995krawtchouk}]
For positive integers $q,k,n$, Krawtchouk polynomials are defined as
\begin{equation}K_k(x;n,q) = \sum_{j=0}^k (-1)^j \binom{x}{j} \binom{n-x}{k-j}(q-1)^{k-j},
\end{equation}
where 
\[\binom{x}{j}=\frac{x(x-1)...(x-j+1)}{j!}.\]
Specially, $\binom{x}{0} = 1$.
\end{Definition}

\begin{Lemma}[\cite{levenshtein1995krawtchouk}]\label{lemma:2.4}
When $x = 1,...,n$, 
\begin{equation}
\sum_{k=0}^{d-1} K_k(x;n,q)= K_{d-1}(x-1;n-1,q),
\end{equation}

\end{Lemma}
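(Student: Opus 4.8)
The plan is to establish the Pascal-type recurrence
\[
K_k(x;n,q) = K_k(x-1;n-1,q) - K_{k-1}(x-1;n-1,q),
\]
valid for all $k\ge 0$ with the convention $K_{-1}\equiv 0$, and then to sum it over $k=0,\dots,d-1$; the right-hand side telescopes and leaves exactly $K_{d-1}(x-1;n-1,q)$.

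First I would derive the recurrence directly from the defining sum. Writing $n-x=(n-1)-(x-1)$ and applying the Pascal identity $\binom{x}{j}=\binom{x-1}{j}+\binom{x-1}{j-1}$ to the factor $\binom{x}{j}$ in
\[
K_k(x;n,q)=\sum_{j=0}^k(-1)^j\binom{x}{j}\binom{n-x}{k-j}(q-1)^{k-j},
\]
the sum splits into two pieces. The piece coming from $\binom{x-1}{j}$ is precisely $K_k(x-1;n-1,q)$. In the piece coming from $\binom{x-1}{j-1}$, the index shift $j\mapsto j+1$ reindexes the sum and produces an overall sign, turning it into $-K_{k-1}(x-1;n-1,q)$. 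This is the only computation in the argument, and it is routine; the one point needing care is that $\binom{x}{j}$ here is the generalized (polynomial) binomial coefficient, but since $x\in\{1,\dots,n\}$ everything is an honest polynomial identity with integer arguments and the Pascal identity applies verbatim.

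With the recurrence in hand, summing over $0\le k\le d-1$ gives
\[
\sum_{k=0}^{d-1}K_k(x;n,q)=\sum_{k=0}^{d-1}\bigl(K_k(x-1;n-1,q)-K_{k-1}(x-1;n-1,q)\bigr)=K_{d-1}(x-1;n-1,q),
\]
since consecutive terms cancel and $K_{-1}(x-1;n-1,q)=0$.

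Alternatively, and perhaps more transparently, one can argue via generating functions: from the definition one checks the standard identity $\sum_{k\ge 0}K_k(x;n,q)z^k=(1-z)^x(1+(q-1)z)^{n-x}$, so that $\sum_{k=0}^{d-1}K_k(x;n,q)$ is the coefficient of $z^{d-1}$ in $\tfrac{1}{1-z}(1-z)^x(1+(q-1)z)^{n-x}=(1-z)^{x-1}(1+(q-1)z)^{(n-1)-(x-1)}$, which is exactly the generating function of $K_k(x-1;n-1,q)$; reading off the coefficient of $z^{d-1}$ yields the claim. Either way there is no genuine obstacle here: the content reduces to a single bookkeeping identity for binomial coefficients, and the only thing to watch is the index shift and the sign it introduces.
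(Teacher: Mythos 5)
Your proof is correct. Note that the paper itself gives no argument for this lemma at all---it is simply quoted from Levenshtein's survey---so there is nothing internal to compare against; you have supplied a proof where the authors supply a citation. Both of your routes check out: the Pascal-type recurrence $K_k(x;n,q)=K_k(x-1;n-1,q)-K_{k-1}(x-1;n-1,q)$ follows exactly as you describe, since $\binom{x}{j}=\binom{x-1}{j}+\binom{x-1}{j-1}$ is a polynomial identity valid for the generalized binomial coefficient (with $\binom{x-1}{-1}=0$ absorbing the $j=0$ term), the index shift $j\mapsto j+1$ produces the sign and the drop from $k$ to $k-1$ in both $\binom{n-x}{k-j}$ and $(q-1)^{k-j}$, and the telescoping sum is immediate. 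The generating-function version is equally sound for $x\in\{1,\dots,n\}$, where $(1-z)^{x-1}$ and $(1+(q-1)z)^{n-x}$ are honest polynomials; dividing by $1-z$ and extracting the coefficient of $z^{d-1}$ is the cleaner of the two arguments, but either one fully establishes the lemma.
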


It is well-known that Delsarte\cite{delsarte1973algebraic} has used Krawtchouk polynomials to prove an upper bound of $A_q(n,d)$ by linear programming. We now want to use them to prove a lower bound by spectral graph theory.

Write $u=(u_1,....,u_n), v=(v_1,...,v_n) \in \FF_q^n$ and define $\langle u,v\rangle=\sum_{i=1}^n u_iv_i\in \FF_q$. Now we have the following theorem on the spectrum of $G_{q,n,d}$. 

\begin{Theorem}\label{thm:2.5}
Denote $z=\exp\{\frac{2\pi i}{q}\}\in\CC$. The $q^n$ orthogonal eigenvectors of $G_{q,n,d}$ are 
\begin{equation}
a_v  =(z^{\langle u,v\rangle})_{u\in V(G_{q,n,d})}
\end{equation}
for all $v\in \FF_q^n$. The corresponding eigenvalue $\lambda_v$ of $a_v$ is $K_{d-1}(w(v)-1,n-1,q)-1$ if $w(v)\neq 0$ and $V_q(n,d-1)-1$ if $w(v)= 0$. 
\end{Theorem}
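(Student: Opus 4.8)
The plan is to exploit the Cayley graph structure of $G_{q,n,d}$ on the Abelian group $H=(\FF_q^n,+)$ with connection set $S=\{u\in\FF_q^n:1\le w(u)\le d-1\}$, together with the character–eigenvector correspondence recalled above. For $q$ prime the characters of $(\FF_q^n,+)$ are exactly the maps $\chi_v:u\mapsto z^{\langle u,v\rangle}$, $v\in\FF_q^n$, since $z^{\langle\cdot,v+v'\rangle}=z^{\langle\cdot,v\rangle}z^{\langle\cdot,v'\rangle}$ and $v\mapsto\chi_v$ is injective. Hence each $a_v=(z^{\langle u,v\rangle})_u$ is an eigenvector, with eigenvalue $\lambda_v=\sum_{s\in S}z^{\langle s,v\rangle}$. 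Orthogonality and completeness of the $q^n$ vectors $a_v$ follow from the standard computation $\langle a_v,a_{v'}\rangle=\sum_{u\in\FF_q^n}z^{\langle u,v-v'\rangle}=\prod_{j=1}^n\big(\sum_{t\in\FF_q}z^{t(v_j-v_j')}\big)$, which equals $q^n$ if $v=v'$ and $0$ otherwise, because $\sum_{t\in\FF_q}z^{ta}=0$ whenever $a\neq 0$ in $\FF_q$ (the map $t\mapsto ta$ permutes $\FF_q$, so the sum is $\sum_{s\in\FF_q}z^{s}=0$). Thus everything reduces to evaluating the character sum $\lambda_v$ in closed form.

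When $w(v)=0$ we have $\langle s,v\rangle=0$ for all $s$, so $\lambda_v=|S|=V_q(n,d-1)-1$, as claimed. Assume now $w(v)\neq 0$. Write $\lambda_v=\big(\sum_{w(s)\le d-1}z^{\langle s,v\rangle}\big)-1$, having added and then removed the $s=0$ term, and group the remaining sum by the weight of $s$:
\[
\sum_{w(s)\le d-1}z^{\langle s,v\rangle}=\sum_{k=0}^{d-1}\Big(\sum_{w(s)=k}z^{\langle s,v\rangle}\Big).
\]
The key step is the identity $\sum_{w(s)=k}z^{\langle s,v\rangle}=K_k(w(v);n,q)$. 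To prove it, fix $v$ of weight $i=w(v)$; by symmetry we may assume the nonzero coordinates of $v$ are the first $i$. Sorting the vectors $s$ of weight $k$ according to how many (say $j$) of their nonzero positions lie among the first $i$ coordinates: each nonzero position outside the first $i$ contributes a factor $1$, giving $\binom{n-i}{k-j}(q-1)^{k-j}$ choices, while a nonzero position $p\le i$ contributes $\sum_{t\in\FF_q\setminus\{0\}}z^{tv_p}=-1$ (the full sum over $\FF_q$ vanishes since $v_p\neq 0$), for a total factor $(-1)^j$ over $\binom{i}{j}$ choices of those positions. Summing over $j$ yields exactly $\sum_{j=0}^k(-1)^j\binom{i}{j}\binom{n-i}{k-j}(q-1)^{k-j}=K_k(i;n,q)$.

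Finally, combining the last two displays with Lemma \ref{lemma:2.4} (applicable since $1\le w(v)\le n$),
\[
\sum_{w(s)\le d-1}z^{\langle s,v\rangle}=\sum_{k=0}^{d-1}K_k(w(v);n,q)=K_{d-1}(w(v)-1;n-1,q),
\]
so $\lambda_v=K_{d-1}(w(v)-1;n-1,q)-1$, which completes the proof. The only real work is the character-sum evaluation: recognizing the weight-$k$ slice of the sum as a Krawtchouk value, where the root-of-unity cancellation $\sum_{t\in\FF_q}z^{ta}=0$ is what produces the alternating sign; after that, Lemma \ref{lemma:2.4} telescopes the partial sum over Hamming balls into a single shifted Krawtchouk polynomial, and the rest is bookkeeping about characters of $\FF_q^n$.
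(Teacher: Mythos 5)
Your proposal is correct and follows essentially the same route as the paper: identify $G_{q,n,d}$ as a Cayley graph, use the characters $\chi_v(u)=z^{\langle u,v\rangle}$ as eigenvectors with the standard orthogonality computation, evaluate the character sum over $S$ by grouping weight-$k$ vectors according to how many nonzero positions meet the support of $v$ to recognize $K_k(w(v);n,q)$, and telescope via Lemma \ref{lemma:2.4}. The only cosmetic difference is that you add and remove the $s=0$ term to start the sum at $k=0$, while the paper sums from $k=1$ and absorbs the $-1$ directly.
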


\begin{proof}

Recall $G_{q,n,d}$ is a Cayley graph on $H= \FF_q^n$ with difference set $S=\{u\in \FF_q^n|1\leq w(u)\leq d-1\}$. Let $\chi$ be a character of $(\mathbb{F}_q^n,+)$ and $e_i$ be the vector in $\FF_q^n$ with $1$ on $i$-th position and $0$ otherwise. For any $u=(u_1,...,u_n)\in \mathbb{F}_q^n$, 
\[\chi(u) = \chi(\sum_{i=1}^n u_ie_i) = \prod_{i=1}^n \chi(e_i)^{u_i},\]
so $\chi(u)$ can be determined by $\{\chi(e_i)\}_{1\leq i\leq n}$. Since $\chi(qe_i)=\chi(e_i)^q=1$, $\chi(e_i)$ must be a $q$-th root of unity. For all $v=(v_1,...,v_n)\in\FF_q^n$, define $\chi_v$ to be a character satisfying $\chi_v(e_i) = z^{v_i}$. The $q^n$ eigenvectors of $G_{q,n,d}$ are 
\[a_v=(\chi_v(u))_{u\in V(G_{q,n,d})}=(\prod_{i=1}^n \chi_v(e_i)^{u_i})_{u\in V(G_{q,n,d})}=(z^{\langle u,v\rangle})_{u\in V(G_{q,n,d})}.\]

For any $v,v'\in\FF_q^n$ and $v\neq v'$, 
\[\sum_{u\in \FF_q^n}\chi_v(u)\overline{\chi_{v'}(u)}=\sum_{u\in \FF_q^n}z^{\langle u,v-v'\rangle}=\prod_{i=1}^n\sum_{j=0}^{q-1}z^{j(v_i-v'_i)}=0. \]

Therefore, these $q^n$ eigenvectors are orthogonal with each other, and hence linearly independent.

The eigenvalues of $a_v$ are $\lambda_v = \sum_{u\in S}\chi_v(u)$. Let $supp(v) = \{i|v_i \neq 0\}$. If $w(v)\neq 0$, for any integers $k,j\in\ZZ$, denote 
\[\mathcal{A}_{k,j}=\{A \subseteq \{1,...,n\}\big| |A|= k \text{ and } |A\cap supp(v)|=j\},\]
then $|\mathcal{A}_{k,j}|=\binom{n-w(v)}{k-j}\binom{w(v)}{j}$ and $S=\bigcup_{k=1}^{d-1}\bigcup_{j=0}^{k}\mathcal{A}_{k,j}$. For any $A\in \mathcal{A}_{k,j}$,
\begin{align}
\notag
\sum_{supp(u)=A}\chi_v(u)&=\sum_{supp(u)=A}z^{\langle u,v\rangle}\\ \notag
&=\prod_{i\in A}(z^{v_i}+z^{2v_i}+...+z^{(q-1)v_i})=(q-1)^{k-j}(-1)^j. 
\end{align}

Therefore, 
\begin{align}
\notag
\sum_{u\in S}\chi_v(u)&=\sum_{k=1}^{d-1}\sum_{j=0}^k\sum_{A\in\mathcal{A}_{k,j}} \sum_{supp(u)=A}\chi_v(u)\\
\notag
&=\sum_{k=1}^{d-1}\sum_{j=0}^k \binom{n-w(v)}{k-j}\binom{w(v)}{j}(q-1)^{k-j}(-1)^j\\ 
\notag
&=\sum_{k=1}^{d-1}K_k(w(v);n,q) = K_{d-1}(w(v)-1;n-1,q)-1. 
\end{align}
Here the last equation is from Lemma \ref{lemma:2.4}. 

If $w(v)=0$, $\chi_v(u)=1$ for all $u\in \FF_q^n$, so $\lambda_v = \sum_{u\in S}\chi_v(u)=V_q(n,d-1)-1$, which is the degree of $G_{q,n,d}$.
\end{proof}

From the proof, we see the maximum eigenvalue of $G_{q,n,d}$ is $V_q(n,d-1)-1$ and the corresponding eigenvector is all-one vector $\textbf{1}$. 

Let $\lambda_{min}(G)$ be the minimum eigenvalue of $G$, then $\lambda_{min}(G_{q,n,d})=\min_{1\leq x\leq n} K_{d-1}(x-1;n-1,q)-1$. It seems not easy to find the closed form of $x$ which minimize $K_{d-1}(x-1;n-1,q)$, but we can compute the accurate value of $\lambda_{min}(G_{q,n,d})$ in polynomial time for given $n,q$ and $d$ by Theorem \ref{thm:2.5}.

Now we can apply two bounds on independence number related to the minimum eigenvalue. The famous Hoffman bound for $G_{q,n,d}$ is:

\begin{Proposition}\label{pro:2.8}
\begin{equation}
\alpha(G_{q,n,d})\leq \frac{-q^n\lambda_{min}(G_{q,n,d})}{V_q(n,d-1)-1+\lambda_{min}(G_{q,n,d})}.
\end{equation}
\end{Proposition}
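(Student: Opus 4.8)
The plan is to use the Rayleigh-quotient (eigenvalue) method for the independence number, feeding in the spectrum computed in Theorem \ref{thm:2.5}. Write $N := q^n$ and $D := V_q(n,d-1)-1$ for the number of vertices and the (regular) degree of $G_{q,n,d}$, and let $A$ be its adjacency matrix. By the remark following Theorem \ref{thm:2.5}, the all-one vector $\mathbf{1}$ is an eigenvector with the largest eigenvalue $D$, and $\lambda_{min} := \lambda_{min}(G_{q,n,d})$ is the least eigenvalue. By Proposition \ref{prop:2.1} it is enough to bound $\alpha := \alpha(G_{q,n,d})$.

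First I would fix a largest independent set $I$, $|I| = \alpha$, and let $x\in\RR^N$ be its $0/1$ indicator vector. Independence means no edge lies inside $I$, so $x^\top A x = 0$. Decomposing $x = \frac{\alpha}{N}\mathbf{1} + y$ with $y\perp\mathbf{1}$, one has $\|x\|^2 = \alpha$ and hence $\|y\|^2 = \alpha - \alpha^2/N$; since $A\mathbf{1}=D\mathbf{1}$ and $y$ is orthogonal to $\mathbf{1}$, the cross terms vanish and $0 = x^\top A x = \frac{\alpha^2}{N}D + y^\top A y$. The Rayleigh estimate $y^\top A y \ge \lambda_{min}\|y\|^2$ (valid since $\lambda_{min}$ is the smallest eigenvalue of the symmetric matrix $A$) then gives
\[
0 \;\ge\; \frac{\alpha^2}{N}D + \lambda_{min}\Bigl(\alpha - \frac{\alpha^2}{N}\Bigr) = \frac{\alpha^2}{N}(D-\lambda_{min}) + \lambda_{min}\,\alpha .
\]
Dividing by $\alpha>0$ yields $\frac{\alpha}{N}(D-\lambda_{min}) \le -\lambda_{min}$.

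The step that lands exactly on the stated denominator is to observe that $\lambda_{min}<0$: the eigenvalues of $A$ sum to $\mathrm{tr}\,A = 0$ while the largest, $D$, is positive, so the least eigenvalue is negative. Consequently $D+\lambda_{min}\le D-\lambda_{min}$, and since $\alpha/N\ge 0$,
\[
\frac{\alpha}{N}(D+\lambda_{min}) \;\le\; \frac{\alpha}{N}(D-\lambda_{min}) \;\le\; -\lambda_{min}.
\]
Dividing by $D+\lambda_{min}$ then gives $\alpha \le \dfrac{-N\lambda_{min}}{D+\lambda_{min}} = \dfrac{-q^n\lambda_{min}(G_{q,n,d})}{V_q(n,d-1)-1+\lambda_{min}(G_{q,n,d})}$, as claimed. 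The hard part is the orthogonal decomposition together with the Rayleigh bound $y^\top A y\ge\lambda_{min}\|y\|^2$, which is exactly where the explicit spectrum of Theorem \ref{thm:2.5} enters; the remaining manipulations are elementary. One point to check is that $D+\lambda_{min}>0$, so that the final division is legitimate — this holds because $\lambda_{min}\ge -D$, with the boundary case $D+\lambda_{min}=0$ occurring only when $G_{q,n,d}$ is bipartite, where the right-hand side is read as $+\infty$ and the inequality is trivial.
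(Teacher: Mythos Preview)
The paper does not actually prove Proposition~\ref{pro:2.8}; it simply invokes ``the famous Hoffman bound'' and states the inequality. Your argument is the standard proof of the Hoffman ratio bound for regular graphs and is correct.

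In fact your computation yields the sharper inequality
\[
\alpha \;\le\; \frac{-N\lambda_{min}}{D-\lambda_{min}}
\;=\;\frac{-q^n\lambda_{min}(G_{q,n,d})}{V_q(n,d-1)-1-\lambda_{min}(G_{q,n,d})},
\]
which is the genuine Hoffman bound; you then deliberately weaken it, replacing $D-\lambda_{min}$ by $D+\lambda_{min}$, to match the denominator printed in the paper. That extra step is logically fine (since $\lambda_{min}<0$ the resulting upper bound is larger), but it is worth remarking that the ``$+\lambda_{min}$'' in the paper's denominator is almost certainly a sign slip: the Hoffman bound in every standard reference has $D-\lambda_{min}$, and nothing in the paper uses the weakened form. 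So your intermediate inequality is the one that should really be recorded.

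One small inaccuracy: you say the Rayleigh estimate $y^\top A y\ge\lambda_{min}\|y\|^2$ is ``exactly where the explicit spectrum of Theorem~\ref{thm:2.5} enters''. It is not --- that inequality holds for any real symmetric $A$, and the only structural facts you use are that $G_{q,n,d}$ is $D$-regular (so $A\mathbf{1}=D\mathbf{1}$) and has $q^n$ vertices. Theorem~\ref{thm:2.5} is only needed if one wants to \emph{evaluate} $\lambda_{min}$ via Krawtchouk polynomials, not to establish the bound itself.
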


The second bound is from Wilf\cite{wilf1986spectral}:

\begin{Proposition}\label{pro:2.7}
Let G be a $D$-regular graph of $N$ vertices, then
\[\alpha(G)\geq \frac{N}{D+1+M(\lambda_{min}(G)+1)/N},\]
where 
\[M=\max\{M_+^2,M_-^2\},\]
\[M_+ = \min_{b(G)_i>0}\frac{1}{b(G)_i},\ M_- = \min_{b(G)_i<0}-\frac{1}{b(G)_i},\]
and $b(G)$ is one of normalized real eigenvectors of the minimum eigenvalue.

\end{Proposition}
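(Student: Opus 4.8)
The plan is to deduce the bound from the Motzkin--Straus description of the independence number of the complement graph $\overline G$, and then to apply it to a test vector of the shape $\mathbf{1}+c\,b(G)$, with the scalar $c$ pushed as far as the nonnegativity constraint permits.

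First I would record the reduction step. Write $A$ for the adjacency matrix of $G$ and $J$ for the all-ones matrix, so that $\overline G$ has adjacency matrix $J-I-A$ and clique number $\alpha(G)$. The Motzkin--Straus theorem applied to $\overline G$, together with the identity $x^{\mathsf{T}}Jx=(\mathbf{1}^{\mathsf{T}}x)^2=1$ valid on the simplex, gives
\[
\frac{1}{\alpha(G)}=\min\bigl\{\,x^{\mathsf{T}}(I+A)x\ :\ x\in\RR^N,\ x\ge 0,\ \mathbf{1}^{\mathsf{T}}x=1\,\bigr\}.
\]
Homogenizing (replace an arbitrary nonzero $y\ge 0$ by $x=y/(\mathbf{1}^{\mathsf{T}}y)$, noting $y^{\mathsf{T}}(I+A)y\ge y^{\mathsf{T}}y>0$), this is equivalent to the statement that
\[
\alpha(G)\ \ge\ \frac{(\mathbf{1}^{\mathsf{T}}y)^2}{y^{\mathsf{T}}(I+A)y}\qquad\text{for every nonzero }y\ge 0 .
\]
So it is enough to find one good nonnegative $y$.

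Next I would take $y=\mathbf{1}+c\,b$, where $b=b(G)$ is a unit eigenvector for $\lambda_{min}(G)$ and $c\in\RR$ is to be chosen. We may assume $G$ has an edge (otherwise $\alpha(G)=N$ and there is nothing to prove), so $\lambda_{min}(G)<D=\lambda_{max}(G)$ and therefore $b$ is orthogonal to the $\lambda_{max}$-eigenvector $\mathbf{1}$; hence $\mathbf{1}^{\mathsf{T}}b=0$ and $A\mathbf{1}=D\mathbf{1}$. These two identities kill the cross term in the quadratic form, and since $\mathbf{1}^{\mathsf{T}}(I+A)\mathbf{1}=N(D+1)$, $b^{\mathsf{T}}(I+A)b=1+\lambda_{min}(G)$ and $\mathbf{1}^{\mathsf{T}}y=N$, the displayed bound becomes
\[
\alpha(G)\ \ge\ \frac{N^2}{\,N(D+1)+c^2\bigl(\lambda_{min}(G)+1\bigr)\,}.
\]
Finally, $y_i=1+cb_i\ge 0$ for every $i$ is exactly the condition $-M_+\le c\le M_-$ (a positive $c$ being limited by the negative entries of $b$ and a negative $c$ by the positive ones; both kinds of entry occur because $\mathbf{1}^{\mathsf{T}}b=0$). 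Picking $c$ equal to whichever of $M_-$ and $-M_+$ has the larger square (so that $c^2=M$, while $c$ still lies in $[-M_+,M_-]$) yields
\[
\alpha(G)\ \ge\ \frac{N^2}{\,N(D+1)+M\bigl(\lambda_{min}(G)+1\bigr)\,}=\frac{N}{\,D+1+M(\lambda_{min}(G)+1)/N\,},
\]
which is the assertion.

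I do not expect a genuine obstacle here: the only nontrivial ingredient is the classical Motzkin--Straus identity, and everything after it is a two-line computation with the eigenvector $b$. The one thing worth flagging is that this estimate improves on the elementary bound $\alpha(G)\ge N/(D+1)$ of Lemma~\ref{lemma:1.5} precisely when $\lambda_{min}(G)+1<0$, so that taking $c\ne 0$ actually helps; when $\lambda_{min}(G)\ge -1$ the argument still produces a correct (merely weaker) inequality, since $\mathbf{1}+c\,b\ge 0$ for every $c\in[-M_+,M_-]$ in any case.
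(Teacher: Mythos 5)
Your argument is correct. The paper itself gives no proof of Proposition~\ref{pro:2.7}; it is quoted from Wilf \cite{wilf1986spectral}, so there is no internal proof to compare against. Your route --- the Motzkin--Straus characterization of $\omega(\overline G)=\alpha(G)$ applied to the homogenized quotient $(\mathbf{1}^{\mathsf{T}}y)^2/\,y^{\mathsf{T}}(I+A)y$ with the test vector $y=\mathbf{1}+c\,b(G)$, and $c$ pushed to the boundary of the nonnegativity interval $[-M_+,M_-]$ --- is in fact essentially Wilf's original argument, so you have reproduced the intended proof rather than found a different one. The details all check: $y^{\mathsf{T}}(I+A)y\ge y^{\mathsf{T}}y>0$ justifies the homogenization for $y\ge 0$; orthogonality of $b$ to $\mathbf{1}$ (valid once $G$ has an edge, since then $\lambda_{min}<0<D$) kills the cross term; and positivity of the resulting denominator is automatic from $y^{\mathsf{T}}(I+A)y>0$, so no sign caveat on $\lambda_{min}+1$ is needed.
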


To apply Proposition \ref{pro:2.7}, we need real eigenvectors of $G_{q,n,d}$, but when $q\geq 3$, the eigenvectors given from Theorem \ref{thm:2.5} are complex. By our knowledge now, we can only get part of real eigenvectors from the following idea.
 
Given a nonempty set $A\subseteq\{1,...,n\}$ , for $j=1,...,n-1$, define $v^{(j)}\in \FF_q^n$ by
\[(v^{(j)})_i=\left\{\begin{array}{cc}
z^j & \text{ if } i\in A; \\
0 & \text{ otherwise }
\end{array}\right.\]
Define $b_A=\sum_{j=1}^{n-1} a_{v^{(j)}}$. Notice that the value of $\lambda_v$ is determined by $w(v)$ and $w(v^{(j)})=|A|$ for all $j$, then $b_A$ is an eigenvector of $G_{q,n,d}$ with eigenvalue $K_{d-1}(|A|-1;n-1,q)-1$. Since the entries of $b_A$ are only $-1$ and $q-1$ and $b_A^T\textbf{1}=0$, we know that $||b_A||_2=\sqrt{q^n(q-1)}$. 

There exists some  $A\subseteq \{1,...,n\}$ such that $b(G_{q,n,d})=b_A/||b_A||_2$ and $M = q^n(q-1)$ in Proposition \ref{pro:2.7}. Therefore,
 
\begin{Corollary}\label{cor:2.7}
\begin{equation}
\alpha(G_{q,n,d})\geq \frac{q^n}{V_q(n,d-1)+(q-1)\lambda_{min}(G_{q,n,d})+q}.
\end{equation}
\end{Corollary}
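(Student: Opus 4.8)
The plan is to apply Proposition~\ref{pro:2.7} to $G = G_{q,n,d}$, for which $N = q^n$ and $D = V_q(n,d-1)-1$, feeding it the real eigenvector $b_A$ constructed in the discussion preceding the statement; the two things that must be pinned down are the choice of the set $A$ (so that $b_A$ lies in the minimum eigenspace) and the resulting value of the constant $M$. By Theorem~\ref{thm:2.5} the eigenvalue carried by $b_A$ is $K_{d-1}(|A|-1;n-1,q)-1$, while $\lambda_{min}(G_{q,n,d}) = \min_{1\le x\le n}\big(K_{d-1}(x-1;n-1,q)-1\big)$. I would pick $x^{\ast}\in\{1,\dots,n\}$ attaining this minimum and any nonempty $A\subseteq\{1,\dots,n\}$ with $|A| = x^{\ast}$; then $b_A$ is a \emph{real} eigenvector of $G_{q,n,d}$ for the eigenvalue $\lambda_{min}(G_{q,n,d})$, its coordinates take the two values $q-1$ and $-1$ (both occurring, since $A\neq\emptyset$), $b_A^{T}\textbf{1}=0$, and $\|b_A\|_2 = \sqrt{q^n(q-1)}$, exactly as recorded above. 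Setting $b(G_{q,n,d}) = b_A/\|b_A\|_2$ then gives the normalized real eigenvector of the minimum eigenvalue that Proposition~\ref{pro:2.7} requires.

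Next I would evaluate $M$. Every positive coordinate of $b(G_{q,n,d})$ equals $(q-1)/\|b_A\|_2$ and every negative coordinate equals $-1/\|b_A\|_2$, so $M_{+} = \|b_A\|_2/(q-1)$ and $M_{-} = \|b_A\|_2$, and hence
\[M_{+}^{2} = \frac{q^n}{q-1}, \qquad M_{-}^{2} = q^n(q-1).\]
Since $q\ge 2$ forces $q^n(q-1)\ge q^n/(q-1)$, this yields $M = q^n(q-1)$; note the answer is the same for every admissible choice of $A$, so the bound we obtain is unambiguous. Substituting $N = q^n$, $D = V_q(n,d-1)-1$, $M = q^n(q-1)$ and $\lambda_{min}(G) = \lambda_{min}(G_{q,n,d})$ into Proposition~\ref{pro:2.7}, the term $M(\lambda_{min}(G)+1)/N$ collapses to $(q-1)\big(\lambda_{min}(G_{q,n,d})+1\big)$, and collecting terms in the denominator gives the claimed inequality.

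I do not anticipate a genuine obstacle here: once Theorem~\ref{thm:2.5} and Proposition~\ref{pro:2.7} are in hand, the corollary is essentially a bookkeeping exercise. The one step that deserves a sentence of justification is the choice of $A$ — one must note that $\lambda_{min}(G_{q,n,d})$ is attained at some weight $x^{\ast}$ lying in $\{1,\dots,n\}$, which is precisely the range of sizes available to a nonempty $A$, so the family $\{b_A\}$ really does reach the minimum eigenspace. The rest — the norm identity $\|b_A\|_2^2 = q^n(q-1)$, the comparison $q^n(q-1)\ge q^n/(q-1)$, and the final algebraic simplification of the denominator — is routine.
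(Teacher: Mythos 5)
Your proposal is correct and follows essentially the same route as the paper, which obtains the corollary by feeding the real eigenvector $b_A$ (with $\|b_A\|_2^2=q^n(q-1)$ and $M=q^n(q-1)$) into Wilf's bound, Proposition~\ref{pro:2.7}; your explicit choice of $|A|=x^\ast$ and the computation of $M_+^2$ and $M_-^2$ just make precise what the paper leaves implicit. One tiny remark: direct substitution actually yields the denominator $V_q(n,d-1)+(q-1)\lambda_{min}(G_{q,n,d})+q-1$, which is slightly \emph{stronger} than the stated bound with $+q$ (a discrepancy already present in the paper, whose form is chosen to match Theorem~\ref{thm:3.2} at $t=0$), so the claimed inequality follows a fortiori.
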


We notice that corollary \ref{cor:2.7} is an improvement of Gilbert-Varshamov bound since $\lambda_{min}(G_{q,n,d})<-1$. Proposition \ref{pro:2.8} and Corollary \ref{cor:2.7} can be calculated in polynomial time for given $n,q$ and $d$ by Theorem \ref{thm:2.5}.

\section{Improved Lower Bound}\label{sec:3}

In this section, we will improve the lower bound further.

Let $G_0 = G_{q,n,d}$. $\lambda_{min}^{(0)}$ is the minimum eigenvalue of $G_0$ and $a_{v^{(0)}}$ is one of corresponding eigenvectors. By induction, assume we already have a series of orthogonal vectors $v^{(0)},...,v^{(t-1)}$. Define $G_t$ to be the induced graph of $G_0$ with vertex set $V_t=\bigcup_{i=0}^{t-1}\{u\in \FF_q^n|\langle u,v^{(i)}\rangle=0\}$ and edge set $E(G_0)\cap (V_t\times V_t)$. Notice that $V_t$ is a subspace of $\FF_q^n$. 

\begin{Lemma}\label{lemma:3.1}
For $0\leq t< n$, $G_t$ is Cayley graph with $q^{n-t}$ vertices and difference set $S_t = V_t\cap \{u\in \mathbb{F}_q^n|1\leq w(u)\leq d-1\}$. The $q^{n-t}$ orthogonal eigenvectors of $G_t$ are 
\begin{equation}
a_v^{(t)} = (\chi_v(u))_{u\in V(G_t)}= (z^{\langle u,v\rangle})_{u\in V(G_t)},
\end{equation}
where $v$ is not a linear combination of $v^{(0)},...,v^{(t-1)}$ and $\chi_v$ is defined in Theorem \ref{thm:2.5}.
\end{Lemma}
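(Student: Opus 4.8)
The plan is to show that $G_t$, defined as an induced subgraph of $G_0 = G_{q,n,d}$ on the subspace $V_t \subseteq \FF_q^n$, is itself a Cayley graph on the group $(V_t, +)$, and then diagonalize it using the restricted characters exactly as in Theorem \ref{thm:2.5}.

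First I would verify the Cayley graph structure. Since $V_t = \bigcap_{i=0}^{t-1}\{u : \langle u, v^{(i)}\rangle = 0\}$ is the solution set of $t$ linear equations, it is a subspace of dimension at least $n-t$; assuming the $v^{(i)}$ are linearly independent (which the inductive construction should guarantee, and which is implicit in calling them "orthogonal" in the relevant sense), $\dim V_t = n - t$, giving $|V(G_t)| = q^{n-t}$. The group $(V_t,+)$ is an Abelian subgroup of $(\FF_q^n,+)$, and it is closed under negation, so the set $S_t = V_t \cap \{u : 1 \le w(u) \le d-1\} = S \cap V_t$ inherits the symmetry $-S_t = S_t$ and $0 \notin S_t$. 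Two vertices $x,y \in V_t$ are adjacent in $G_t$ iff they are adjacent in $G_0$, i.e.\ iff $1 \le d(x,y) \le d-1$, i.e.\ iff $y - x \in S$; but since $x,y \in V_t$ we have $y-x \in V_t$, so this is equivalent to $y - x \in S_t$. Hence $G_t$ is exactly the Cayley graph on $V_t$ with connection set $S_t$.

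Next, for the spectrum, I would restrict the characters $\chi_v$ of $\FF_q^n$ to the subgroup $V_t$. For $v \in \FF_q^n$, the map $u \mapsto z^{\langle u,v\rangle}$ restricted to $u \in V_t$ is a character of $(V_t,+)$, and two such restrictions $\chi_v|_{V_t}$ and $\chi_{v'}|_{V_t}$ coincide iff $\langle u, v - v'\rangle = 0$ for all $u \in V_t$, i.e.\ iff $v - v'$ lies in the orthogonal complement of $V_t$, which is $\mathrm{span}\{v^{(0)},\dots,v^{(t-1)}\}$. Choosing, for each coset, the representative $v$ orthogonal to $v^{(0)},\dots,v^{(t-1)}$ gives exactly $q^{n-t}$ distinct characters, hence $q^{n-t}$ vectors $a_v^{(t)} = (z^{\langle u,v\rangle})_{u \in V(G_t)}$. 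The orthogonality computation is the same as in Theorem \ref{thm:2.5}: $\sum_{u\in V_t}\chi_v(u)\overline{\chi_{v'}(u)} = \sum_{u \in V_t} z^{\langle u, v-v'\rangle} = 0$ whenever $v - v'$ is not in the orthogonal complement of $V_t$, because a nontrivial character sums to zero over a finite group. Each $a_v^{(t)}$ is an eigenvector with eigenvalue $\sum_{s \in S_t}\chi_v(s)$, by the general Cayley graph fact recalled before Theorem \ref{thm:2.5}. Since there are $q^{n-t}$ of them and they are pairwise orthogonal, they form a complete eigenbasis.

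The main obstacle — really the only subtle point — is making precise in what sense $v^{(0)},\dots,v^{(t-1)}$ are "orthogonal" and why this forces them to be linearly independent over $\FF_q$ (so that $\dim V_t = n-t$ holds and the count $q^{n-t}$ is correct), and symmetrically why restricting attention to the $q^{n-t}$ representatives orthogonal to all $v^{(i)}$ gives a well-defined bijective parametrization of the characters of $V_t$. If the inductive construction of the $v^{(i)}$ does not automatically yield $\FF_q$-linear independence, one should instead state the conclusion in terms of $\dim V_t$ rather than the literal value $n-t$; but assuming the construction is set up so that each new $v^{(t)}$ is genuinely independent of the previous ones, the argument is a routine transcription of Theorem \ref{thm:2.5} to the subgroup $V_t$, with Lemma \ref{lemma:2.4} not even needed here since the statement only asserts the form of the eigenvectors, not a closed form for the eigenvalues.
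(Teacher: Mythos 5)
Your proof follows essentially the same route as the paper's: restrict the characters $\chi_v$ to the subgroup $V_t$, identify when two restrictions coincide (namely when $v-v'\in\mathrm{span}\{v^{(0)},\dots,v^{(t-1)}\}$), and verify pairwise orthogonality by a vanishing nontrivial character sum; you additionally verify the Cayley-graph structure and the count of vertices, which the paper's proof asserts without argument. The subtlety you flag --- that the $v^{(i)}$ must be linearly independent over $\FF_q$ so that $\dim V_t=n-t$, and that $V_t\cap\mathrm{span}\{v^{(0)},\dots,v^{(t-1)}\}$ must be trivial for the $q^{n-t}$ representatives orthogonal to all $v^{(i)}$ to parametrize the characters of $V_t$ bijectively --- is a genuine gap that the paper's own proof also leaves unaddressed, so your treatment is, if anything, the more careful of the two.
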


\begin{proof}
Notice that $a_v^{(t)}=(z^{\langle u,v\rangle})_{u\in V(G_t)}$ and $a_{v'}^{(t)}=(z^{\langle u,v'\rangle})_{u\in V(G_t)}$ are the same vector on $V_t$ if and only if $v-v'$ is a linear combination of $v^{(0)},...,v^{(t-1)}$, so we only need to  consider $q^{n-t}$ eigenvectors  $a_v^{(t)}$ where $v$ is orthogonal to $v^{(0)},...,v^{(t-1)}$. Now we need to prove that these $q^{n-t}$  eigenvectors are orthogonal with each other. Suppose $v-v'$ is not a linear combination of $v^{(0)},...,v^{(t-1)}$, then there must be some $w\in V_t$ which is not orthogonal to $v-v'$. Expand $w_1$ into $\{w_1,...,w_{n-t}\}$ which is a basis for $V_t$. Therefore,
\[\sum_{u\in V_t} z^{\langle u,v-v'\rangle} = \prod_{i=1}^{n-t}\sum_{j=0}^{q-1} z^{j\langle w_i,v-v'\rangle}=0. \qedhere\]
\end{proof}

Let $\lambda_{min}^{(t)}$ denote the minimum eigenvalue of $G_{t}$ and $D^{(t)}$ denote the degree of vertices in $G_t$. Note that the dimension of the eigenspace of $\lambda_{min}^{(t)}$ may be larger than one. For convenience, we can regard the vectors in $\FF_q^n$ as base-$q$ numbers and choose $v^{(t)}$ as the smallest one such that $a^{(t)}_{v^{(t)}}$ is the eigenvector of $\lambda_{min}^{(t)}$. Now define $V_{t+1}=V_t \cup \{u\in\FF_q^n|(a_{v^{(t)}})_u=1\}=\bigcup_{i=0}^{t}\{u\in\FF_q^n|\langle u,v^{(i)}\rangle=0\}$ and $G_{t+1}$ is the induced graph from $V_{t+1}$.

\begin{Lemma}\label{lemma:3.2}
The degree of every vertex in $G_{t+1}$ is $D^{(t+1)} = (D^{(t)}+(q-1)\lambda_{min}^{(t)})/q$. 
\end{Lemma}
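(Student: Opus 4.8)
The plan is to compute the degree of $G_{t+1}$ directly from its description as a Cayley graph. By Lemma~\ref{lemma:3.1} (applied with $t+1$ in place of $t$), $G_{t+1}$ is a Cayley graph on the subspace $V_{t+1}$ with difference set $S_{t+1}=V_{t+1}\cap\{u\in\FF_q^n\mid 1\le w(u)\le d-1\}$; in particular it is regular, so $D^{(t+1)}=|S_{t+1}|$. Since $V_{t+1}=\{u\in V_t\mid\langle u,v^{(t)}\rangle=0\}$ --- this is exactly the slice of $V_t$ on which the entries of the eigenvector $a_{v^{(t)}}$ equal $1$, because $z^{\langle u,v^{(t)}\rangle}=1\iff\langle u,v^{(t)}\rangle=0$ --- we get $S_{t+1}=\{u\in S_t\mid\langle u,v^{(t)}\rangle=0\}$. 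So the whole problem reduces to counting the elements of the old difference set $S_t$ that are killed by the linear functional $\langle\,\cdot\,,v^{(t)}\rangle$.

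To count them I would stratify $S_t$ by the value of this functional: put $N_j=\#\{u\in S_t\mid\langle u,v^{(t)}\rangle=j\}$ for $j\in\FF_q=\{0,1,\dots,q-1\}$, so that $D^{(t)}=|S_t|=\sum_{j}N_j$ and $D^{(t+1)}=N_0$. The second relation I would bring in is the eigenvalue identity from Theorem~\ref{thm:2.5}/Lemma~\ref{lemma:3.1}: $\lambda_{min}^{(t)}=\sum_{u\in S_t}z^{\langle u,v^{(t)}\rangle}=\sum_{j=0}^{q-1}N_j\,z^{j}$. Together with one symmetry, these two relations determine $N_0$.

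The symmetry is the scaling action $u\mapsto cu$ for $c\in\FF_q^{*}$. This map permutes $\FF_q^n$, preserves Hamming weight (here we use that $q$ is prime, so $cu_i=0\iff u_i=0$), preserves the subspace $V_t$, hence preserves $S_t$; and $\langle cu,v^{(t)}\rangle=c\langle u,v^{(t)}\rangle$. Therefore $u\mapsto cu$ carries the stratum $\{\langle u,v^{(t)}\rangle=j\}$ bijectively onto $\{\langle u,v^{(t)}\rangle=cj\}$, and letting $c$ range over $\FF_q^{*}$ forces $N_1=N_2=\dots=N_{q-1}=:N$. Substituting, $D^{(t)}=N_0+(q-1)N$ and, using $1+z+\cdots+z^{q-1}=0$, $\lambda_{min}^{(t)}=N_0+N\sum_{j=1}^{q-1}z^{j}=N_0-N$. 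Solving this $2\times2$ linear system yields $N=(D^{(t)}-\lambda_{min}^{(t)})/q$ and $N_0=(D^{(t)}+(q-1)\lambda_{min}^{(t)})/q$, i.e.\ $D^{(t+1)}=(D^{(t)}+(q-1)\lambda_{min}^{(t)})/q$, as claimed.

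I do not expect a serious obstacle; the argument is short once the two ingredients are in place. The two points that need care are (i) identifying $S_{t+1}$ correctly as the zero-slice of $\langle\,\cdot\,,v^{(t)}\rangle$ inside $S_t$, which is precisely where the construction of $V_{t+1}$ from the $1$-entries of $a_{v^{(t)}}$ is used, and (ii) the weight-preservation of the scaling map, which relies on $q$ being prime. It is also reassuring that the resulting formula automatically outputs a nonnegative integer (and, since the minimum eigenvalue of a graph is $\le 0$, a value no larger than $D^{(t)}/q$), which serves as a consistency check.
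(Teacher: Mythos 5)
Your proposal is correct and follows essentially the same route as the paper: both identify $D^{(t+1)}=|S_{t+1}|$ with the zero-slice of $\langle\,\cdot\,,v^{(t)}\rangle$ inside $S_t$, use the scaling symmetry $u\mapsto cu$ (equivalently, the orbit decomposition of $S_t\setminus S_{t+1}$ under $\FF_q^{*}$) together with $1+z+\cdots+z^{q-1}=0$ to evaluate $\lambda_{min}^{(t)}=\sum_{u\in S_t}z^{\langle u,v^{(t)}\rangle}$ as $|S_{t+1}|-\frac{1}{q-1}(|S_t|-|S_{t+1}|)$, and solve for $D^{(t+1)}$. Your level-set bookkeeping with the $N_j$ is just a slightly more explicit packaging of the paper's computation.
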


\begin{proof}

Recall $z = \exp\{2\pi i/q\}\in\CC$. To see the degree of $G_{t+1}$, we notice that for any $j=1,...,q-1$, $u\in S_t$ if and only if $iu\in S_t$. Also,
\[\sum_{j=1}^{q-1} z^{\langle ju,v^{(t)}\rangle} = \left\{\begin{array}{cc}
q-1, &\text{ if } \langle u,v^{(t)}\rangle  = 0\\
-1, &\text{ otherwise }
\end{array}\right.\]
Therefore, 
\[\lambda_{min}^{(t)} = \sum_{u\in S_{t}}z^{\langle u,v^{(t)}\rangle}=|S_{t+1}|-\frac{1}{q-1}(|S_t|-|S_{t+1}|)  = D^{(t+1)}-\frac{1}{q-1}(D^{(t)}-D^{(t+1)}) \] 
where the second equality follows from $S_{t+1} = S_t\cap V_{t+1} = S_t\cap \{u\in V_t|\langle u,v^{(t)}\rangle = 0\}$.
\end{proof}

\begin{Remark}\label{re:3.1}
From Lemma \ref{lemma:3.1} and \ref{lemma:3.2}, we can say that $G_{t+1}$ is sparser than $G_{t}$. In explicit language, the ratio of vertex number to degree of $G_{t+1}$ is larger than that of $G_{t}$, since
\[\frac{|V(G_{t+1})|}{D^{(t+1)}+1}= \frac{|V(G_t)|}{D^{(t)}+(q-1)\lambda_{min}^{(t)}+q}\geq \frac{|V(G_{t})|}{D^{(t)}+1}\] 
if $\lambda_{min}^{(t)}\leq -1$, which holds for any graph with at least one edge. This means that the lower bound in Lemma \ref{lemma:1.5} for $G_{t+1}$ is better than that for $G_t$. In fact, among all subgraphs of $G_t$ whose vertex set is a subgroup with $V_t/q$ elements, $G_{t+1}$ is the sparsest one.
\end{Remark}

Due to Remark \ref{re:3.1}, we can improve the Gilbert-Varshamov bound:

\begin{Theorem}\label{thm:3.2}
If $G_{t+1}$ has at least one edge, then
\begin{equation}
A_q(n,d)=\alpha(G_{q,n,d})\geq \frac{q^n}{V_q(n,d-1)+\sum_{i=0}^{t}(q-1)q^i\lambda_{min}^{(i)}+q^{t+1}}.\\
\end{equation}
\end{Theorem}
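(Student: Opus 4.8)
The plan is to iterate the one-step sparsification estimate from Remark \ref{re:3.1} along the chain $G_0 = G_{q,n,d} \supseteq G_1 \supseteq \cdots \supseteq G_{t+1}$ and then apply Lemma \ref{lemma:1.5} to the final graph $G_{t+1}$. First I would observe that since $V_{t+1}$ is a subspace of $\FF_q^n$ of dimension $n-(t+1)$ (each new condition $\langle u, v^{(i)}\rangle = 0$ cuts the dimension by one, because the $v^{(i)}$ are chosen to be linearly independent — indeed pairwise orthogonal and nonzero), we have $|V(G_{t+1})| = q^{n-t-1}$. Next, I would unwind the recursion for the degree: Lemma \ref{lemma:3.2} gives $D^{(t+1)} = (D^{(t)} + (q-1)\lambda_{min}^{(t)})/q$, and $D^{(0)} = V_q(n,d-1) - 1$. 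Multiplying the $i$-th identity by $q^i$ and telescoping yields
\begin{equation}
q^{t+1} D^{(t+1)} = D^{(0)} + \sum_{i=0}^{t} (q-1) q^i \lambda_{min}^{(i)},
\end{equation}
so that $D^{(t+1)} + 1 = q^{-(t+1)}\bigl(V_q(n,d-1) - 1 + \sum_{i=0}^{t}(q-1)q^i \lambda_{min}^{(i)} + q^{t+1}\bigr)$.

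Then I would apply Lemma \ref{lemma:1.5} to $G_{t+1}$, which is a $D^{(t+1)}$-regular graph (regularity holds because $G_{t+1}$ is a Cayley graph on the subgroup $V_{t+1}$, by the same argument as in Lemma \ref{lemma:3.1}, so $\Delta(G_{t+1}) = D^{(t+1)}$). This gives
\begin{equation}
\alpha(G_{t+1}) \geq \frac{|V(G_{t+1})|}{D^{(t+1)}+1} = \frac{q^{n-t-1}}{q^{-(t+1)}\bigl(V_q(n,d-1)-1+\sum_{i=0}^{t}(q-1)q^i\lambda_{min}^{(i)}+q^{t+1}\bigr)} = \frac{q^n}{V_q(n,d-1)-1+\sum_{i=0}^{t}(q-1)q^i\lambda_{min}^{(i)}+q^{t+1}}.
\end{equation}
Finally, since $G_{t+1}$ is an induced subgraph of $G_0 = G_{q,n,d}$, any independent set of $G_{t+1}$ is an independent set of $G_{q,n,d}$, hence $\alpha(G_{q,n,d}) \geq \alpha(G_{t+1})$, and combined with Proposition \ref{prop:2.1} this gives the claimed bound. (One small slack: the stated bound has $V_q(n,d-1)$ rather than $V_q(n,d-1)-1$ in the denominator, which is harmless since it only weakens the inequality by enlarging the denominator; I would either absorb the $-1$ into the constant term or simply note the bound still holds a fortiori.)

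The main obstacle, and the point that needs the most care, is verifying that the hypothesis ``$G_{t+1}$ has at least one edge'' is exactly what licenses each application of Remark \ref{re:3.1} along the way: we need $\lambda_{min}^{(i)} \leq -1$ for every $i = 0, \dots, t$ in order to guarantee that the ratio $|V(G_{i+1})|/(D^{(i+1)}+1)$ is nondecreasing, i.e.\ that the chain of inequalities actually proceeds in the favorable direction. Since $G_{t+1} \subseteq G_t \subseteq \cdots \subseteq G_0$ as induced subgraphs, if $G_{t+1}$ has an edge then so does every $G_i$ with $i \leq t$, and a graph with at least one edge has $\lambda_{min} \leq -1$ (the $2\times 2$ principal submatrix $\left(\begin{smallmatrix}0&1\\1&0\end{smallmatrix}\right)$ forces an eigenvalue $\leq -1$ by eigenvalue interlacing). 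Thus the single hypothesis propagates to all the intermediate steps, and the telescoping argument goes through; everything else is the routine bookkeeping of the recursion above.
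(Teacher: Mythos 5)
Your proof is correct and follows essentially the same route as the paper: telescope the degree recursion of Lemma \ref{lemma:3.2} from $D^{(0)}=V_q(n,d-1)-1$, apply Lemma \ref{lemma:1.5} to the regular induced subgraph $G_{t+1}$, and conclude via $\alpha(G_{q,n,d})\geq\alpha(G_{t+1})$ and Proposition \ref{prop:2.1}. Your bookkeeping is in fact slightly more careful than the paper's (you correctly track the $-1$ in $D^{(0)}$ and note that the stated denominator with $V_q(n,d-1)$ is the marginally weaker, a fortiori valid form), and your remark on why the edge hypothesis propagates down the chain is a sensible, if not strictly necessary, addition.
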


\begin{proof}
From Lemma \ref{lemma:3.2}, 
\[D^{(t+1)} = \frac{D^{(t)}+(q-1)\lambda_{min}^{(t)}}{q}=\frac{V_q(n,d-1)+\sum_{i=0}^{t}(q-1)q^i\lambda_{min}^{(i)}}{q^{t+1}},\]
then
\[\alpha(G_{q,n,d})\geq \alpha(G_{t+1})\geq \frac{|V(G_{t+1})|}{D^{(t+1)}}=\frac{q^n}{V_q(n,d-1)+\sum_{i=0}^{t}(q-1)q^i\lambda_{min}^{(i)}+q^{t+1}}.\qedhere\]
\end{proof}

Theorem \ref{thm:3.2} is exactly Corollary \ref{cor:2.7} when $t=0$, and improves Corollary \ref{cor:2.7} when $t\geq 1$.

We can repeat the above procedures to get new subgraphs and improve the bound until for some integer $s$ the graph $G_s$ has no edges, so $\lambda_{min}^{(s)}=0$. Now $V_s$ is an independent set of $G_{q,n,d}$ and a subspace of $\FF_q^n$, so $V_s$ is a $[n,n-s,d]_q$ linear code. Therefore, 

\begin{Theorem}\label{thm:3.3}
If $\lambda_{min}^{(s)}=0$, then there is a $[n,n-s,d]_q$ linear code. 
\end{Theorem}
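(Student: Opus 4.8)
The plan is to extract a linear code directly from the subspace $V_s$ built in the iterative construction, using the hypothesis $\lambda_{min}^{(s)}=0$ only to force $G_s$ to be edgeless. First I would show that $\lambda_{min}^{(s)}=0$ implies $E(G_s)=\emptyset$. By Lemma \ref{lemma:3.1}, $G_s$ is a Cayley graph on $V_s$ with difference set $S_s=V_s\cap\{u\in\FF_q^n\mid 1\le w(u)\le d-1\}$; if $S_s\neq\emptyset$ then $G_s$ is $|S_s|$-regular with $|S_s|\ge 1$, hence has at least one edge, and as recalled in Remark \ref{re:3.1} a graph with an edge satisfies $\lambda_{min}\le -1$ (apply Cauchy interlacing to the $2\times 2$ principal submatrix indexed by the two endpoints of an edge). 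This contradicts $\lambda_{min}^{(s)}=0$, so $S_s=\emptyset$ and $G_s$ has no edges.

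Next I would unwind what this says about $V_s$. By construction $G_s$ is the subgraph of $G_0=G_{q,n,d}$ induced on $V_s=\bigcap_{i=0}^{s-1}\{u\in\FF_q^n\mid\langle u,v^{(i)}\rangle=0\}$, so "$G_s$ edgeless" means no two distinct vectors of $V_s$ are adjacent in $G_{q,n,d}$; that is, $V_s$ is an independent set of $G_{q,n,d}$. Moreover $V_s$ is an intersection of linear hyperplanes, hence a subspace, and Lemma \ref{lemma:3.1} (for $t=s\le n$) gives $|V_s|=q^{n-s}$, so $\dim V_s=n-s$ and in particular $v^{(0)},\dots,v^{(s-1)}$ are linearly independent.

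It then remains to turn this independent subspace into a code. For distinct $c,c'\in V_s$ the difference $c-c'$ lies in $V_s\setminus\{0\}$ because $V_s$ is a subspace, so $d(c,c')=w(c-c')\ge 1$, while non-adjacency of $c$ and $c'$ in $G_{q,n,d}$ says exactly that $d(c,c')\notin\{1,\dots,d-1\}$; hence $d(c,c')\ge d$ for every pair of distinct codewords, and $V_s$ is an $[n,n-s,d]_q$ linear code. The whole argument is essentially bookkeeping on top of the construction of Section \ref{sec:3}; the only step carrying real content is the implication $\lambda_{min}^{(s)}=0\Rightarrow E(G_s)=\emptyset$, which itself reduces to the elementary spectral fact that a graph with an edge has minimum eigenvalue at most $-1$. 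I foresee no genuine obstacle, beyond keeping $s\le n$ so that Lemma \ref{lemma:3.1} and the dimension count apply (the case $s=n$ giving only the trivial zero code).
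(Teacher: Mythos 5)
Your proposal is correct and follows essentially the same route as the paper, which simply observes that once $G_s$ has no edges, $V_s$ is an independent set of $G_{q,n,d}$ and a subspace of $\FF_q^n$ of dimension $n-s$, hence an $[n,n-s,d]_q$ linear code. You merely fill in details the paper leaves implicit (the implication $\lambda_{min}^{(s)}=0\Rightarrow E(G_s)=\emptyset$ via the elementary fact that a graph with an edge has minimum eigenvalue at most $-1$, and the distance computation $d(c,c')=w(c-c')\ge d$), so there is nothing substantively different to compare.
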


This implies that Theorem \ref{thm:3.2} also holds for linear codes.

In other words, $\lambda_{min}^{(s)}=0$ is equal to $D^{(s)}=0$, which implies 
\[V_q(n,d-1)+\sum_{t=0}^{s-1}(q-1)q^t\lambda_{min}^{(t)} = 0,\]
then 

\begin{Theorem}
For any sequence $\{b_t\}$ with $b_t\geq \lambda_{min}^{(t)}$ and $0\leq t\leq s-1$, there exists a $[n,k,d]_q$ linear code if
\[V_q(n,d-1)+\sum_{t=0}^{n-k-1}(q-1)q^tb_t\geq 0.\]
\end{Theorem}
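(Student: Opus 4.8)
The plan is to run the vertex-reduction process of Section~\ref{sec:3} to its genuine stopping time and then read off the dimension from the range of the prescribed bounds. Recall that the induced graphs $G_0 = G_{q,n,d}, G_1, G_2, \dots$ have degrees obeying the recursion of Lemma~\ref{lemma:3.2}, whose closed form is
\[
q^{m}D^{(m)} = V_q(n,d-1) + \sum_{t=0}^{m-1}(q-1)q^{t}\lambda_{min}^{(t)}.
\]
By Theorem~\ref{thm:3.3}, once the process reaches the first index $s$ with $\lambda_{min}^{(s)} = 0$ (equivalently $D^{(s)} = 0$), the subspace $V_s$ is an $[n,n-s,d]_q$ linear code, and at that index the right-hand side above vanishes, so that $V_q(n,d-1) + \sum_{t=0}^{s-1}(q-1)q^{t}\lambda_{min}^{(t)} = 0$. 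The strategy is to compare this exact identity against the sum formed from the upper bounds $b_t$ and to identify the target dimension $k$ with $n-s$.

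First I would set $m = n-k$ and introduce the two partial sums
\[
P_m = V_q(n,d-1) + \sum_{t=0}^{m-1}(q-1)q^{t}\lambda_{min}^{(t)}, \qquad \widetilde P_m = V_q(n,d-1) + \sum_{t=0}^{m-1}(q-1)q^{t}b_t .
\]
Since $(q-1)q^{t} > 0$ and $b_t \ge \lambda_{min}^{(t)}$ for $0 \le t \le s-1$, a termwise comparison yields $\widetilde P_m \ge P_m = q^{m}D^{(m)} \ge 0$ whenever $m \le s$. In particular, for the matching index $m = n-k = s$ we have $P_s = 0$, so the hypothesis $\widetilde P_{n-k} \ge 0$ is indeed satisfied, and Theorem~\ref{thm:3.3} supplies the $[n,n-s,d]_q = [n,k,d]_q$ code directly. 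The admissible values of $k$ are thus governed by where the prescribed range $0 \le t \le n-k-1$ sits relative to the stopping index $s$.

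Next I would handle the remaining admissible values of $k$ by a subcode reduction. Whenever $k \le n-s$, the stopping subspace $V_s$ has dimension $n-s \ge k$, and any $k$-dimensional subspace $C \subseteq V_s$ is again linear; since every nonzero vector of $C$ is a nonzero vector of $V_s$ and hence has Hamming weight at least $d$, the code $C$ has minimum distance at least $d$ and is an $[n,k,d]_q$ code. Combining the stopping-time construction with passage to a subcode therefore yields the conclusion for every admissible $k$.

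The main obstacle is the direction of the comparison in the second step. Because the $b_t$ only over-estimate the eigenvalues, the inequality runs as $\widetilde P_m \ge P_m$, so the condition $\widetilde P_{n-k} \ge 0$ is consistent with—rather than a driver of—the vanishing of $D^{(n-k)}$; the genuine leverage comes from the exact identity $P_s = 0$ at the true stopping index together with the monotone decrease of $P_m$. The delicate point to pin down is therefore the alignment of the prescribed range $0 \le t \le n-k-1$ with the stopping index $s$, i.e.\ the reduction to $k \le n-s$, after which Theorem~\ref{thm:3.3} and the subcode argument close the proof. Verifying that this alignment is exactly what the hypothesis encodes, and that no assumption on the $b_t$ beyond $b_t \ge \lambda_{min}^{(t)}$ is required, is where the care lies.
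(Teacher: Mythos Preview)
Your approach is essentially the paper's own: the paper gives no formal proof beyond the identity $V_q(n,d-1)+\sum_{t=0}^{s-1}(q-1)q^t\lambda_{min}^{(t)}=0$ displayed immediately before the statement, together with the implicit termwise comparison coming from $b_t\ge\lambda_{min}^{(t)}$, and your outline reproduces precisely this plus the subcode step for $k\le n-s$.

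The obstacle you isolate in your final paragraph is genuine and is equally unaddressed in the paper. Taking $b_t\equiv 0$ (which satisfies $b_t\ge\lambda_{min}^{(t)}$ since $\lambda_{min}^{(t)}\le -1$ for $t<s$) makes the hypothesis reduce to $V_q(n,d-1)\ge 0$, which holds for \emph{every} $k$; the theorem as literally stated would then yield $[n,k,d]_q$ codes of arbitrarily large $k$, which is absurd. So the inequality $\widetilde P_{n-k}\ge 0$ cannot by itself force $n-k\ge s$, and the direction of the comparison $\widetilde P_m\ge P_m$ runs the wrong way to extract that conclusion. The paper's derivation only supports the reading $n-k=s$ (and hence $k\le n-s$ via subcodes), which is exactly what your argument delivers; your caution about the ``alignment'' is therefore well placed rather than a defect of your proof.
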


Since $V_s= \bigcup_{i=0}^{s-1}\{u\in \FF_q^n|\langle u,v^{(i-1)}\rangle=0\}$, the code with parity check matrix $(v^{(0)},...,v^{(s-1)})$  satisfies Theorem \ref{thm:3.3}. 

Now the last problem is how to calculate $\lambda_{min}^{(t)}$. For any $v\in\mathbb{F}_q^n$, the eigenvector of $G_t$ is $a^{(t)}_v=(z^{\langle u,v\rangle})_{u\in V_t}$, and the eigenvalue is $\lambda^{(t)}_v=\sum_{u\in S_t}z^{\langle u,v\rangle}$. Now define
\[v_r=v+rv^{(t-1)}.\]
Then 
\[\sum_{r=0}^{q-1}\chi_{v_r}(u) = z^{\langle u,v\rangle}\sum_{r=0}^{q-1}z^{r\langle u,v^{(t)}\rangle} = q\chi_v(u)\mathbf{1}_{\langle u,v^{(t-1)}\rangle = 0}.\]
Thus,
\[\lambda^{(t)}_v=\sum_{u\in S_t}\chi_v(u)=\sum_{u\in S_{t-1}}\chi_v(u)\mathbf{1}_{\langle u,v^{(t-1)}\rangle = 0}=\frac{1}{q}\sum_{r=0}^{q-1}\sum_{u\in S_{t-1}}\chi_{v_r}(u)=\frac{1}{q}\sum_{r=0}^{q-1}\lambda^{(t-1)}_{v_r}.\]

Since $\lambda^{(0)}_v$ is known by Theorem \ref{thm:2.5}, all eigenvalues of $G_t$ (including the minimum eigenvalue $\lambda_{min}^{(t)}$) can be obtained. Now we can complete the whole procedures. Algorithm 1 describes the whole process briefly.

\begin{breakablealgorithm}
\caption{Framework of Generating Subgraph and Linear Code}
\begin{algorithmic}\label{alg:1}
\STATE $G=G_{q,n,d}$, $t=0$; 
\REPEAT 
\STATE Calculate the minimum eigenvalue $\lambda^{(t)}_{min}$ of $G$ and choose one of the corresponding eigenvector $a^{(t)}_{v^{(t)}}$; 
\STATE $G \Leftarrow$ the induced graph with vertex set $\{u\in V(G)|\langle u,v^{(t)}\rangle=0\}$;
\STATE $t \Leftarrow t+1$;
\UNTIL{$\lambda^{(t)}_{min}=0$}
\RETURN  the code with check matrix $(v^{(0)},...,v^{(t-1)})$.
\end{algorithmic}
\end{breakablealgorithm}

\begin{Remark}
Our results are proved when $q$ is a prime number. In fact, when $q$ is a prime power, the results still hold with little difference. The only difficult part is that the closed form for eigenvalues of $G_{q,n,d}$ will be much more complicated.
\end{Remark}

\section{Conclusions and Open Problems}\label{sec:4}

In this paper we use graph spectral method to improve Gilbert-Varshamov bound. The improvement is non-asymptotic. A natural question is to ask

\begin{Problem}
What is the asymptotic form of Theorem \ref{thm:3.2} or \ref{thm:3.3}?
\end{Problem}

We also design Algorithm \ref{alg:1} to give codes satisfying our improved bound. The time complexity of the algorithm is $O(nq^n)$. 
 
\begin{Problem}
Could it be possible to find algorithms with lower complexity?
\end{Problem}

In the algorithm, we use the minimum eigenvalue and the corresponding eigenvector of $G_t$ to construct $G_{t+1}$. It is possible to use other eigenvalues and eigenvectors instead of $\lambda^{(t)}_{min}$ and $a^{(t)}_{v^{(t)}}$ in Algorithm \ref{alg:1}. 

\begin{Conjecture}
For all vectors $v$ such that $a^{(t)}_v$ is the eigenvector with eigenvalue $\lambda_{min}^{(t)}$ in $G_t$, the graphs induced by $\{u\in V_t|\langle u,v\rangle=0\}$ are isomorphic.
\end{Conjecture}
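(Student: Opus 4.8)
The plan is to reduce the claimed graph isomorphism to a statement about the automorphism group of $G_{q,n,d}$ acting transitively on the set of minimizing vectors $v$ modulo the span of $v^{(0)},\dots,v^{(t-1)}$. Concretely, suppose $v$ and $v'$ both lie orthogonal to $v^{(0)},\dots,v^{(t-1)}$ and both satisfy $\lambda^{(t)}_v = \lambda^{(t)}_{v'} = \lambda_{min}^{(t)}$. We want a linear bijection $\phi$ of $\FF_q^n$ that fixes $V_t$ setwise, preserves Hamming weight (hence maps edges of $G_0$ to edges), and sends the hyperplane $\{u\in V_t : \langle u,v\rangle = 0\}$ onto $\{u\in V_t : \langle u,v'\rangle = 0\}$. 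The natural candidates are monomial maps: coordinate permutations composed with nonzero rescalings of individual coordinates, since exactly these preserve $w(\cdot)$ and hence lie in $\mathrm{Aut}(G_{q,n,d})$. First I would establish that the eigenvalue $\lambda^{(t)}_v$ depends on $v$ only through some combinatorial invariant of the pair $(v, \{v^{(0)},\dots,v^{(t-1)}\})$ — in the base case $t=0$ this invariant is just $w(v)$, by Theorem \ref{thm:2.5}, and the recursion $\lambda^{(t)}_v = \frac1q\sum_{r=0}^{q-1}\lambda^{(t-1)}_{v+rv^{(t-1)}}$ should let one track how the invariant evolves.

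The key steps, in order, would be: (1) show that two minimizing vectors $v, v'$ for $G_t$ necessarily have the same weight profile relative to $\mathrm{supp}(v^{(0)}),\dots,\mathrm{supp}(v^{(t-1)})$, using the recursion above and the strict unimodality / monotonicity properties of Krawtchouk polynomials on the relevant range; (2) construct an explicit monomial transformation $\phi$ realizing the required rearrangement of coordinates, checking that it can be chosen to fix each $v^{(i)}$ (or at least to fix their span, so that $V_t$ is preserved); (3) verify $\phi$ maps $V_{t+1}$ (built from $v$) bijectively onto the analogous set built from $v'$, so the two induced subgraphs are isomorphic via $\phi$. Step (1) is where most of the analytic work sits; step (2) is essentially bookkeeping once the profiles match; step (3) is formal.

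The main obstacle I anticipate is step (1): it is not obvious that all minimizers of $K_{d-1}(x-1;n-1,q)$ — and, worse, all minimizers of the iterated averages defining $\lambda^{(t)}_v$ — share a single orbit under the weight-preserving group. Krawtchouk polynomials can in principle attain their minimum at more than one integer point, and even when the minimizing weight is unique for $G_0$, the subspace constraints defining $V_t$ could a priori allow genuinely inequivalent configurations $v$ achieving $\lambda_{min}^{(t)}$ (for instance differing in how $\mathrm{supp}(v)$ meets the supports of the earlier $v^{(i)}$). Resolving this likely requires either a sharper structural description of $S_t = V_t \cap \{1\le w(u)\le d-1\}$ than we currently have, or an argument that any such ambiguity still yields isomorphic graphs for a non-obvious reason (e.g.\ a counting identity forcing the two induced graphs to have the same degree sequence and enough regularity to pin down the isomorphism type). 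A plausible fallback is to prove the conjecture only in the regime where $G_t$ is vertex-transitive and the minimizing weight is unique, and to leave the general case open; but a complete proof will hinge on controlling the multiplicity structure of the minimum of these Krawtchouk-type averages.
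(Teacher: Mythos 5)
First, note that the paper offers no proof of this statement: it is posed as an open conjecture, supported only by the remark that it ``is from the symmetry of the graphs.'' So there is nothing in the paper to compare your argument against; the only question is whether your proposal closes the gap, and it does not.

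Your strategy --- realize the isomorphism by a weight-preserving monomial transformation $\phi$ of $\FF_q^n$ that stabilizes $V_t$ and carries $\{u\in V_t:\langle u,v\rangle=0\}$ onto $\{u\in V_t:\langle u,v'\rangle=0\}$ --- is the natural one, and it would suffice if it could be carried out. But the load-bearing step is exactly your step (1), and as you yourself concede, you do not prove it. Two distinct failure modes are left open. Already at $t=0$, the eigenvalue $\lambda_v$ depends only on $w(v)$, so if $K_{d-1}(x-1;n-1,q)$ attains its minimum at two different integer weights $x_1\neq x_2$, then minimizers of different weights exist and \emph{no} monomial map carries one to the other; the conjecture would then have to be established (or refuted) by an entirely different mechanism, e.g.\ showing the two induced subgraphs are isomorphic despite not being related by any automorphism of $G_{q,n,d}$. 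For $t\geq 1$ the situation is worse: $\lambda^{(t)}_v$ is an average $\frac1q\sum_{r}\lambda^{(t-1)}_{v+rv^{(t-1)}}$, so equality of eigenvalues does not pin down the ``weight profile'' of $v$ relative to $v^{(0)},\dots,v^{(t-1)}$, and two minimizers may be combinatorially inequivalent. Your step (2) is also not mere bookkeeping: $\phi$ must stabilize $V_t$, i.e.\ its adjoint must stabilize the span of $v^{(0)},\dots,v^{(t-1)}$, which is a nontrivial extra constraint on the monomial group and shrinks the available orbit considerably. Since the proposal explicitly falls back to ``prove the conjecture only in the regime where \dots the minimizing weight is unique, and leave the general case open,'' it is a plausible attack plan but not a proof; the statement remains open at exactly the point the paper left it.
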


The conjecture is from the symmetry of the graphs. However, we also want to know

\begin{Problem}
Is there a rule to choose $v^{(t)}$ such that $(v^{(0)},...,v^{(s-1)})$ has a good structure which is helpful for encoding and decoding?
\end{Problem}

However, if people choose other eigenvalues rather than the minimum one to construct $G_t$, we believe the result can be improved. 

\begin{Problem}
Which eigenvalue is much better than the minimum one?
\end{Problem}

\bibliographystyle{plain}
\bibliography{ref}

\end{document}